 \documentclass[final,5p,times,twocolumn]{elsarticle}

\usepackage{amssymb}
\usepackage{lipsum}
\usepackage{bbm}
\usepackage{amsmath}
\usepackage{braket}
\usepackage{mathtools,slashed}
\usepackage{hyperref}
\usepackage{xcolor}
\usepackage{graphicx,subfigure}
\usepackage{tikz}
\usepackage{physics}
\usepackage{blindtext}
\usepackage{bm}
\usepackage{array}
\usetikzlibrary{calc}
\usepackage[bb=boondox]{mathalfa}
\usepackage{tcolorbox}
\usepackage{rotating}
\usepackage{afterpage}
\usepackage{setspace}
\usepackage{amsthm}
\newcommand{\sM}{\mathcal{M}}
\newcommand{\sH}{\mathcal{H}}
\newcommand{\sA}{\mathcal{A}}

\newcommand{\bC}{\mathbb{C}}

\usepackage{xcolor}

\usepackage{slashed}
\theoremstyle{plain}
\newtheorem{theorem}{Theorem}[section]
\newtheorem{lemma}[theorem]{Lemma}
\newtheorem{proposition}[theorem]{Proposition}
\newtheorem{corollary}[theorem]{Corollary}

\usepackage{bbm}
\theoremstyle{definition}

\journal{Physics Letters B}

\begin{document}

\begin{frontmatter}
\title{Diffeomorphism Invariant Formulation of CP Violation}

\author[first]{Alibordi Muhammad}
\affiliation[first]{organization={Faculty of Physics, University of Warsaw},
            addressline={Pasteura 5}, 
            city={ Warsaw},
            postcode={02-093}, 
            country={Poland}}

\begin{abstract}
We identify a fundamental tension between the standard formulation of CP violation and diffeomorphism invariance in general relativity. The effective Hamiltonian approach, while phenomenologically successful, relies on a preferred time foliation that is incompatible with general covariance. The CP-violating phases are scalars along the worldline of the decaying parent particle; however, the definition of masses and phases presupposes a local covariant structure, which becomes ill-defined near the origin where curvature is large and metric fluctuations become significant. We propose an information-geometric framework based on relative entropy, exploiting pure quantum states in particle and antiparticle Hilbert spaces. We show how the Sakharov conditions could be reinterpreted in terms of information-geometric quantities, although a fully rigorous phenomenological implementation remains to be developed.
\end{abstract}

\begin{keyword}
CP violation \sep Diffeomorphism invariance \sep Information geometry \sep Modular flow \sep Baryogenesis.
\end{keyword}
\end{frontmatter}

\section{SM CPV Arises from Local Field Structures}
\label{introduction}
The deep tension underlying CP violation becomes visible only when cosmology and quantum theory are placed on equal footing. From Friedmann’s discovery that spacetime manifold has a boundary~\cite{Friedma1922}, through the Hawking–Penrose theorems establishing that singular boundaries are  generic~\cite{HawkingPen1970}, to Vilenkin~\cite{VILENKIN198225} and Hartle–Hawking~\cite{PhysRevD282960} where the universe originates from a timeless quantum state without initial data, all modern pictures of the early universe \citep{HollandsWald2015,HawkingPathInt} lack the structures required by the CP violation : Lorentzian time, unique vacuum, causal propagation, particle–antiparticle distinction, and local QFT. Even Sakharov’s conditions~\cite{Sakharov1967} for baryogenesis—departure from equilibrium, baryon-number violation, and C/CP violation—implicitly assume these structures, which do not exist near the origin.  The definition of non-primitive CP violation sourced by the complex phase~\cite{RevModPhys.88.045002} in the CKM(Cabibbo-Kobayashi-Maskawa)~\cite{Cab1963,KobaMas1973} matrix within an essentially-flat spacetime local QFT presupposes a differentiable manifold, causal order, a measure, a symplectic form, a von-Neumann algebra of observables, analyticity, a Hamiltonian generator, a vacuum state, particle-antiparticle structure, a temporal arrow, well-defined scattering phases, background independence, a valid classical limit, unitarity, and decoherence---the absence of any one renders CP violation ill-defined. Fitch-Cronin~\cite{PhysRevLett.13.138} demonstrated CP violation experimentally in neutral kaon decays, and modern measurements such as CMS $\mathrm{B_s \to J/\psi\phi}$~\cite{cpvcms2024} confirm it with high precision.

To give a apparent geometric construct of the Standard Model(SM)~\cite{RevModPhys.46.7,Haag1996,PCT2000,Frankel2012,Hamilton2017,RevModPhys.90.045003,Nakahara2003} on spacetime, begin with a smooth, oriented, time-oriented 4-dimensional Lorentzian manifold $(\mathcal{M},\eta)$ and a maximal SM atlas
$\mathrm{\Phi_{\mathrm{SM}} = \{ (U_\alpha, \Psi_\alpha) \}_{\alpha\in I}}$, where each chart $\mathrm{\Psi_\alpha: U_\alpha \longrightarrow \mathbb{R}^4 \times \mathbb{C}^{d_\mathrm{SM}}}$ maps spacetime coordinates together with internal gauge and spinor degrees of freedom to a local trivialization of the associated Standard Model bundle. The overlaps $\mathrm{U_\alpha\cap U_\beta \neq \varnothing}$ are related by smooth transition functions of class $\mathrm{C^\infty}$, $\mathrm{\Psi_\beta \circ \Psi_\alpha^{-1} :\Psi_\alpha(U_\alpha\cap U_\beta) \longrightarrow \Psi_\beta(U_\alpha\cap U_\beta)}$, compatible with the full local gauge group $\mathrm{\mathsf{G}_{\mathrm{SM}} =  SU(3)_c \times SU(2)_L \times U(1)_Y}$. Let $\mathrm{\mathsf{P}_{\mathrm{SM}} \to \mathcal{M}}$ denote the principal Standard Model bundle with structure group $\mathrm{\mathsf{G}_{\mathrm{SM}}}$, and define the associated vector bundle $\mathrm{\mathcal{E}_{\mathrm{SM}} = \mathsf{P}_{\mathrm{SM}} \times_\rho V_{\mathrm{SM}}}$, where $\mathrm{\rho: \mathsf{G}_{\mathrm{SM}} \to GL(V_{\mathrm{SM}})}$ is the combined representation carrying spinor, flavor, and gauge quantum numbers, and $\mathrm{V_{\mathrm{SM}}}$ is the corresponding finite-dimensional complex vector space. Sections of this bundle $\mathrm{\Psi: \mathcal{M} \longrightarrow \mathcal{E}_{\mathrm{SM}}}$ 
represent all Standard Model matter fields (quarks, leptons, Higgs) simultaneously, and each local chart $\mathrm{\Psi_\alpha}$ provides a trivialization of $\mathrm{\mathcal{E}_{\mathrm{SM}}}$ over $\mathrm{U_\alpha}$. The consistency condition on overlaps ensures that for $\mathrm{x \in U_\alpha \cap U_\beta}$, $\mathrm{\Psi_\beta(x) = \rho(\eta_{\alpha\beta}(x)) \Psi_\alpha(x), \quad \eta_{\alpha\beta}: U_\alpha\cap U_\beta \to \mathsf{G}_{\mathrm{SM}}}$, where $\mathrm{\eta_{\alpha\beta}}$ are the transition functions defining the principal bundle $\mathrm{\mathsf{P}_{\mathrm{SM}}}$. The cocycle condition $\mathrm{\eta_{\alpha\beta} \eta_{\beta\gamma} \eta_{\gamma\alpha} = \mathbf{1}}$ on triple overlaps $\mathrm{U_\alpha \cap U_\beta \cap U_\gamma}$ ensures global consistency. A covariant derivative compatible with the full Standard Model structure is then a connection  $\mathrm{\mathsf{D} : \Gamma(\mathcal{E}_{\mathrm{SM}}) \longrightarrow \Gamma(T^*\mathcal{M}\otimes \mathcal{E}_{\mathrm{SM}})}$, $\mathrm{\mathsf{D}_\mu = \nabla_\mu + i g_s G_\mu^A T^A + i g W_\mu^a \tau^a + i g' B_\mu Y}$, where $\mathrm{\nabla_\mu}$ is the Levi-Civita spin connection, $\mathrm{(G_\mu^A,W_\mu^a,B_\mu)}$ are the gauge connections in their respective Lie algebras, and $\mathrm{(T^A,\tau^a,Y)}$ are the representation matrices acting on $\mathrm{V_{\mathrm{SM}}}$. The sections $\Psi$ transform covariantly under local $\mathsf{G}_{\mathrm{SM}}$ gauge transformations, and the curvature of $\mathsf{D}$ encodes both strong and electroweak field strengths as differential 2-forms $\mathrm{\mathsf{F} = \mathsf{D} \wedge \mathsf{D} \in \Omega^2(\mathcal{M}, \operatorname{End}(\mathcal{E}_{\mathrm{SM}}))}$. The quantum theory on this bundle is perturbatively renormalizable- all ultraviolet divergences arising from loop corrections to Green functions on $(\sM, \eta)$ are absorbed order by order into a finite set of local counterterms $\delta\mathcal{L} \in \Gamma(\Lambda^4 T^*\sM)$($\mathrm{\Lambda}$ being cut-off) constructed covariantly from the metric, curvature, and matter fields, preserving the gauge structure $\mathsf{G}_{\mathrm{SM}}$ and the fiber bundle topology of $\mathcal{E}_{\mathrm{SM}}$ at each energy scale $\mu$. In curved spacetime, this extends through the Hollands--Wald axioms~\cite{HollandsWald2001, HollandsWald2002}: Wick polynomials and time-ordered products are defined locally and covariantly, with renormalization freedom parametrized by local curvature terms that modify the differential operator governing propagation, acting exclusively on the singular (Hadamard) part of the two-point function.

CP violation is defined geometrically as the obstruction to equivariance of dynamics with respect to the involutive~\cite{GRIMUS1997239} bundle automorphism $\mathrm{\Theta_{CP}}$ acting on the Standard Model state bundle $\mathsf{E}\to\mathcal{M}$. Let $\mathsf{F}\subset\mathsf{E}$ be any finite-rank flavor subbundle transported along timelike curves parametrized by proper time $\tau$. Evolution is generated by a non-Hermitian covariant derivative $\mathrm{\mathsf{D}_\tau=\partial_\tau-\mathsf{H},\,  \mathsf{H}= iM-\tfrac12\Gamma}$, with $\mathrm{M}$ and $\Gamma$ Hermitian endomorphisms. CP invariance holds if and only if $\mathrm{[\Theta_{CP},\mathsf{D}_\tau]=0}$,
so failure of this commutation relation provides an intrinsic, coordinate-independent definition of CP violation. Equivalently, CP violation is present whenever the flavor connection $\mathfrak{G}$ on $\mathsf{E}$ has complex holonomy along a closed curve $\gamma$, $\mathrm{\mathcal{U}(\gamma)=\mathcal{P}\exp\!\left(i\oint_\gamma \mathfrak{G}\right)}$,  whose conjugacy class is not invariant under $\mathrm{\Theta_{CP}}$, or when the curvature two-form $\mathsf{F}$ admits a CP-odd invariant, which is equivalent to the connection being non-Abelian in a physically non-removable sense. In any two-state subsystem with transport eigenvalues $\mathrm{\lambda_i=M_i-\tfrac{i}{2}\Gamma_i}$,
mixing induces CP violation whenever $\mathrm{\left|\frac{q}{p}\right|\neq1}$. For transition sections $\mathrm{\mathsf{A}_f}$ the invariant ratio $\mathrm{\lambda_f=\frac{q}{p}\frac{\bar{\mathsf{A}}_f}{\mathsf{A}_f}}$ encodes all observable CP asymmetries, which are scalar functions of $\mathrm{\lambda_f}$. Hence CP violation is equivalently characterized as a non-vanishing CP-odd curvature invariant of the SM flavor bundle over spacetime, implying that it is not fundamentally a property of any specific process or state but a global geometric feature of the gauge–flavor connection, while experimental observables constitute local probes of this underlying curvature obstruction. In local quantum field theory this notion presupposes a smooth, oriented, time-oriented Lorentzian manifold $(\sM,\eta)$ providing causal and differentiable structure, a symplectic form $\Omega$ on classical phase space enabling quantization, a net of von Neumann algebras $\mathcal{O}\mapsto\mathcal{A}(\mathcal{O})$ satisfying isotony and Einstein causality, a unique Poincar\'e-invariant vacuum $\ket{0}$ selected by the spectrum condition and clustering, a self-adjoint Hamiltonian $\mathsf{H}$ generating unitary evolution $\mathrm{U(t)=e^{-i\mathsf{H}t}}$, particle–antiparticle structure ensured by the CPT theorem, well-defined scattering phases encoded in an $\mathrm{S}$-matrix $\mathrm{S_{fi}=\delta_{fi}+i(2\pi)^4\delta^{(4)}(p_f-p_i)\mathcal{M}}$, and environment-induced decoherence selecting a classical limit in which particle trajectories and detector records are operationally well defined.

\section{CP Violation in Curved Spacetime: Structural Failures}
\label{sec:global}

\subsection{Vacuum Uniqueness}
The Poincar\'e-invariant vacuum $\mathrm{\ket{0_P}}$ satisfying $\mathrm{U(\mathfrak{a},\Lambda)\ket{0_P} = \ket{0_P}}$ provides the foundation for SM CP phenomenology, yet the Haag-Kastler~\cite{HaagKastler1964} algebraic framework exposes this uniqueness as contingent rather than fundamental. The net of local algebras $\mathrm{\mathcal{O} \mapsto \sA(\mathcal{O})}$ obeying isotony, Einstein causality, and Poincar\'e covariance does not by itself single out a preferred vacuum. States are positive normalized functionals
$\mathrm{\omega: \sA \to \bC}$ with $\mathrm{\omega(A^*A) \geq 0}$ and $\omega(\mathbbm{1}) = 1$, each inducing via the GNS~\cite{GelfandNaimark1943} construction a representation triple $(\pi_\omega, \sH_\omega, \ket{\Omega_\omega})$ satisfying $\mathrm{\omega(A) = \bra{\Omega_\omega}\pi_\omega(A)\ket{\Omega_\omega}}$. Inequivalent states yield unitarily inequivalent representations $\pi_{\omega_1} \not\cong \pi_{\omega_2}$, making vacuum selection a representation-theoretic choice. Uniqueness is restored only by imposing \emph{global} constraints; the spectrum condition $\mathrm{P^\mu P_\mu \geq 0}$, weak clustering $\mathrm{\lim_{|x|\to\infty}\omega(A\,\alpha_x(B)) = \omega(A)\omega(B)}$, Haag duality~\cite{Haag1996} $\mathrm{\sA(\mathcal{O}')' = \sA(\mathcal{O})}$, and nuclearity. The Borchers-Buchholz~\cite{Borchers1962,Borchers1992,Buchholz1999,PhysRevD.65.085023,Mund2012} theorem then singles out the Poincar\'e vacuum up to unitary equivalence. Vacuum uniqueness thus emerges from global phase-space properties, not from locality alone. The GNS Hilbert space further supports Tomita-Takesaki~\cite{Tomita1967,Takesaki1970} modular theory: the antilinear map $\mathrm{S_{anti}: A\ket{\Omega_\omega} \mapsto A^*\ket{\Omega_\omega}}$ admits polar decomposition $\mathrm{S_{anti} = J\Delta^{1/2}}$ with modular operator $\Delta$ generating the modular flow $\mathrm{\sigma_\tau(A) = \Delta^{i\tau}A\Delta^{-i\tau}}$, which coincides with physical time evolution iff a global timelike Killing vector exists - a structure absent in generic curved spacetimes.

\subsection{Structural Disintegration in Curved Spacetime}
In curved spacetime, every global condition that secures vacuum uniqueness fails simultaneously. Consider de-Sitter spacetime with metric $\mathrm{ds^2 = a^2(\eta)(-d\eta^2 + d\mathbf{x}^2)}$, $\mathrm{a(\eta) = -1/(H\eta)}$ admits no global timelike translation symmetry, no preferred energy operator exists; accordingly the spectrum condition $\mathrm{P^\mu P_\mu \geq 0}$ is undefined, and clustering fails in the presence of cosmological horizons~\cite{PhysRevD.15.2738}. The canonical symplectic form $\mathrm{\Omega = \int d^3x\,\delta\pi^{ij} \wedge \delta h_{ij}}$, conjugating spatial metric $h_{ij}$ to momentum $\mathrm{\pi^{ij} = \sqrt{h}(\mathcal{K}^{ij} - h^{ij}\mathcal{K})}$~\cite{PhysRev.160.1113}($\mathcal{K}$ being curvature tensor), degenerates on the constraint  surface where the Hamiltonian constraint $\mathrm{{}^{(c)}\mathsf{H} = h^{-1/2}(\pi^{ij}\pi_{ij} - \pi^2/2) -\sqrt{h}\,{}^{(3)}R \approx 0}$ and diffeomorphism constraint $\mathrm{{}^{(c)}\mathsf{H}_i = -2\,{}^{(3)}\nabla_j\pi^j{}_i \approx 0}$ hold. No inverse $\mathrm{\Omega^{-1}}$ exists on the constraint surface, and the presymplectic form must be reduced: the physical phase space $\mathrm{\Gamma_{phys} = {}^{(c)}\mathsf{H}^{-1}(0)/\mathbf{Diff}(\Sigma)}$ carries a nondegenerate reduced symplectic form $\mathrm{\Omega_{phys}}$, but different gauge choices yield symplectomorphic yet numerically distinct reduced spaces. Without a unique canonical structure, the discrete symmetries $\mathrm{C,P,T}$ cease to exist as representation-independent operators. The ADM Hamiltonian $\mathrm{\mathsf{H}_{ADM} = \int d^3x\,[N{}^{(c)}\mathsf{H} + N^i{}^{(c)}\mathsf{H}_i]}$~\cite{Arnowitt2008} vanishes weakly on-shell, requiring that temporal evolution be formulated \emph{relationally}: the relational dynamical phase accumulated by the mass eigenstate splitting  becomes $\mathrm{\phi^{\mathrm{rel}} \sim  \Delta m \cdot \mathbf{T}[\mathrm{clock}]}$, where $\mathbf{T}$ is a Dirac observable serving as a physical clock. Different clock choices yield different values of $\mathrm{\phi^{rel}}$, introducing anambiguity absent in the information-geometric formulation developed in Section~\ref{sec:infogeoformulation}. The S-matrix $\mathrm{S_{fi} = \delta_{fi} + i(2\pi)^4\delta^{(4)}(p_f - p_i)\mathcal{M}(i \to f)}$ requires asymptotic states $\mathrm{\ket{i,\mathrm{in}}_{t \to -\infty}}$ and $\mathrm{\ket{f,\mathrm{out}}_{t \to +\infty}}$ where interactions vanish~\cite{BirrellDavies1982,Thiemann2007,ParkerToms2009} - structure absent in expanding universes lacking asymptotic flatness, rendering scattering amplitudes and the CP asymmetries $\mathrm{|\mathcal{M}|^2 - |\bar{\mathcal{M}}|^2 \propto \sin(\Delta\delta)\sin(\Delta\phi)}$ operationally undefined.

\subsection{Hadamard States and State-Dependent Mixing}
Despite the failure of global structures, local quantum field theory remains well-defined through Hadamard states~\cite{BirrellDavies1982,Thiemann2007,ParkerToms2009,Wald1994}. We restrict attention to QFT on fixed globally hyperbolic backgrounds, isolating effects intrinsic to curved spacetime from full quantum gravity. A Hadamard state $\omega$ on $\mathrm{(\sM, g)}$ is characterized by a two-point function whose singular structure matches the Minkowski vacuum up to smooth corrections~\cite{Radzikowski1996,Moretti2021}. For fermions with non-minimal curvature coupling, the Dirac Hadamard parametrix~\cite{Kratzert2000} takes the form
\begin{equation}
\begin{aligned}
\mathrm{S_{Had}(x,x')} &= \mathrm{\frac{i}{8\pi^2}\bigg[\frac{\slashed{\nabla}\sigma(x,x')}{\sigma(x,x')^2}\,U(x,x')}\\
&\mathrm{\quad + V(x,x')\ln\!\left(\frac{\sigma(x,x')}{\ell^2}\right)+ W(x,x')\bigg]}
\end{aligned}
\label{eq:hadamard_parametrix}
\end{equation}
where, $\mathrm{\sigma(x,x')}$ is Synge's world function, $\mathrm{U(x,x')=\Delta^{1/2}(x,x')\,\mathcal G(x,x')}$ with $\mathrm{\mathcal{G}(x,x')}$ being the bispinor parallel propagator along the geodesic connecting $\mathrm{x}$ and  $\mathrm{\Delta^{1/2}}$ is the Van Vleck-Morette determinant, $\mathrm{V(x,x')}$ is determined by the DeWitt recursion relations~\cite{FULLING199073}, and $\ell$ is a renormalization length scale. The singular terms ($\mathrm{\sigma^{-2}}$, logarithmic) are geometrically determined and state-independent, while $\mathrm{W(x,x')}$ encodes the vacuum choice.  The class of physically admissible Hadamard states comprises the Bunch--Davies state $\mathrm{W_{BD}}$~\cite{BunchDavies1978}, which is the unique de Sitter--invariant state; the adiabatic vacua $W^{(n)}$ with $n\ge2$, obtained via a WKB expansion and satisfying the Hadamard condition by Junker's theorem~\cite{Junker1996}; the Hartle--Hawking state $\mathrm{W_{HH}}$~\cite{PhysRevD.13.2188}, a KMS(Kubo–Martin–Schwinger) state at the Hawking temperature $\mathrm{T_H=\alpha/(2\pi)}$; and the Unruh state $\mathrm{W_{U}}$~\cite{PhysRevD.14.870}, which is thermal on the future horizon. These states define unitarily inequivalent GNS representations~\cite{Haag1996}. Critically, the mixing matrix element
\begin{equation}
\mathrm{M_{12}[\omega] \sim \omega\!\left(\mathcal{T}\left\{\mathcal{H}_{\mathrm{eff}}^{\Delta B=2}(x)\,\mathcal{H}_{\mathrm{eff}}^{\Delta B=2}(y)\right\}\right)}
\label{eq:M12_state_dep}
\end{equation}
depends ($\mathcal{H}_{\mathrm{eff}}$ being weak effective Hamiltonian) on the smooth part $\mathrm{W(x,x')}$ of the two-point function, making $\mathrm{q/p}$ and hence $\mathrm{\phi = \arg(-M_{12}/\Gamma_{12})}$ state-dependent. The state-dependent part is a smooth bisolution $\mathrm{\Delta W(x,y) = W_{\omega_1}(x,y) - W_{\omega_2}(x,y)}$ satisfying $\mathrm{P_x\,\Delta W(x,y) = 0}$ in both arguments, where $\mathrm{P = -\Box_g + m^2 + \xi R}$ is the Klein--Gordon operator. This cannot be absorbed into local counterterms; any counterterm modifies the operator $\mathrm{P \to P' = P + \delta P}$, inducing a two-point function shift $\mathrm{\delta_{\mathrm{ct}}W}$ that satisfies the \emph{inhomogeneous} equation
$\mathrm{P_x\,\delta_{\mathrm{ct}}W = -\delta P_x\, G_P \neq 0}$. Since $\mathrm{\Delta W \in \ker(P)}$ while $\mathrm{\delta_{\mathrm{ct}}W \notin \ker(P)}$, no choice of renormalization scheme can eliminate the state-dependent
contribution. The shift $\mathrm{\Delta M_{12}}$ is therefore a genuine physical effect surviving renormalization~\cite{HollandsWald2001}. Flavor dependence $\mathrm{W_b(m_b) \neq W_s(m_s)}$ induces mass-dependent vacuum
particle creation rates that modify the \emph{magnitude} of CP-violating observables through state-dependent loop corrections; however, CPV itself continues to require the CKM phase~\cite{KobaMas1973} -  curvature alone does not generate CP asymmetry. The central conclusion is that curved spacetime destroys not merely vacuum uniqueness but the kinematical structures on which SM CPV violation is built , (a) symplectic non-degeneracy enabling canonical quantization, (b) a Hamiltonian generator producing unambiguous time flow, (c) Wick rotation defining particles, (d) asymptotic states constructing the S-matrix, and,  (e) foliation-independent phases. Standard Model CP violation is therefore an emergent semiclassical phenomenon, contingent on Minkowski's global symmetries and inapplicable in its standard form to quantum gravitational or early-universe epochs. What survives is the local algebraic structure and the Hadamard condition - and it is from these alone that a diffeomorphism invariant formulation must be constructed.

\section{Diffeomorphism Invariant Reformulation}
\label{sec:infogeoformulation}

\subsection{Information-Geometric Formulation of CP Asymmetry}
The standard formulation of CP violation presupposes a local spacetime structure whose global extension renders the observable ill-defined. We resolve this by proposing an information-geometric~\cite{AmariNagaoka2000} reformulation in which the asymmetry is defined intrinsically through the quantum relative entropy between particle and antiparticle density states. Let $\rho$ and $\bar{\rho}$ denote the reduced density matrices~\cite{NielsenChuang2010} of the particle and antiparticle respectively, obtained by tracing over unobserved degrees of freedom from the full density operator, 
\begin{equation}
\rho = \mathrm{Tr}_{\mathrm{unobs}}\!\left[\rho^{\mathrm{full}}\right], \qquad
\bar{\rho} = \mathrm{Tr}_{\mathrm{unobs}}\!\left[\bar{\rho}^{\mathrm{full}}\right].
\end{equation}
This coarse-graining renders both states generically mixed, satisfying $\mathrm{Tr}(\rho^2) < 1$, which necessitates the full apparatus of mixed-state information geometry. The modular operators $K = -\log\rho$ and $\bar{K} = -\log\bar{\rho}$ are then well-defined, and the quantum relative entropies take the form, 
\begin{equation}
S(\rho\,||\,\bar{\rho}) = \mathrm{Tr}(\rho\bar{K}) - \mathrm{Tr}(\rho K), \quad S(\bar{\rho}\,||\,\rho) = \mathrm{Tr}(\bar{\rho}K) - \mathrm{Tr}(\bar{\rho}\bar{K})
\label{eq:relen}
\end{equation}
where, $\mathrm{Tr}(\rho K)$ is the self-informative von Neumann entropy and $\mathrm{Tr}(\rho\bar{K})$ is the cross-entropy in which information is perceived through the wrong model~\cite{Umegaki1954,Araki1975}. The information-geometric CP asymmetry is
then defined as,
\begin{equation}
\mathfrak{A}_{\mathrm{info}}[g] = \frac{S(\rho\,||\,\bar{\rho})- S(\bar{\rho}\,||\,\rho)}{S(\rho\,||\,\bar{\rho}) + S(\bar{\rho}\,||\,\rho)}.
\label{eq:Ainfo_definition}
\end{equation}
The geometric content is naturally organized by the pair of modular operators $\{K, \bar{K}\}$~\cite{Haag1996}, which simultaneously encode two essential ingredients: spectral asymmetry $\mathrm{Spec}(K) \neq \mathrm{Spec}(\bar{K})$, quantifying statistical
distinguishability between the states, and non-commutativity $[K,\bar{K}] \neq 0$, which constitutes the microscopic origin of Berry curvature. The antisymmetric relative entropy thereby provides a direct operational bridge between modular
non-commutativity and geometric phase, elevating Berry like curvature~\cite{Berry1984} from a kinematic notion to an information-theoretic observable. The CP violation structure requires comparing two distinct parametric families, $\rho(\bm{\theta}): \bm{\theta} \in \mathcal{M} \subset \mathbb{R}^n$ and $\bar{\rho}(\bm{\theta}'): \bm{\theta}' \in \bar{\mathcal{M}} \subset \mathbb{R}^n$. The full geometric structure lives on the product manifold $\mathcal{M}_{\mathrm{CP}} = \mathcal{M} \times \bar{\mathcal{M}}$, and the physical CP violation is assessed by restricting to the diagonal embedding $\Delta: \mathcal{M} \to \mathcal{M}_{\mathrm{CP}}$, $\bm{\theta} \mapsto (\bm{\theta}, \bm{\theta})$, where $\rho(\bm{\theta})$
and $\bar{\rho}(\bm{\theta})$ are compared at identical parameter values. Expanding $S(\rho(\bm{\theta})||\bar{\rho}(\bm{\theta}'))$ around $\bm{\theta}' = \bm{\theta}$,
\begin{equation}
\begin{aligned}
S(\rho(\bm{\theta})||\bar{\rho}(\bm{\theta} + \delta\bm{\theta}))
&= S(\rho\,||\,\bar{\rho})\Big|_{\bm{\theta}} + \left[\frac{\partial S(\rho\,||\,\bar{\rho})}{\partial\theta'^i} \right]_{\theta'=\theta}\!\delta\theta^i\\
&\quad + \frac{1}{2}\left[\frac{\partial^2 S(\rho\,||\,\bar{\rho})}{\partial\theta'^i\partial\theta'^j}\right]_{\theta'=\theta} \!\delta\theta^i\delta\theta^j + \mathcal{O}(\delta\theta^3).
\end{aligned}
\label{eq:cross_expansion}
\end{equation}
The zeroth-order term $S(\rho(\bm{\theta})||\bar{\rho}(\bm{\theta})) > 0$ is the base-point distinguishability responsible for CP violation. The first-order term $\nabla_i^{(\bar{\rho})}S(\rho\,||\,\bar{\rho}) = \mathrm{Tr}[\rho\,\partial_i\bar{K}]$ is the gradient of relative entropy. The second-order information response kernel, 
\begin{equation}
\begin{aligned}
\mathbf{D}_{\mathrm{cross}}(\bm{\theta}) &= \frac{\partial^2}{\partial\theta'^i\partial\theta'^j} \left[S(\rho(\bm{\theta})||\bar{\rho}(\bm{\theta}'))\right]_{\theta'=\theta}\\
&= -\frac{\partial^2}{\partial\theta'^i\partial\theta'^j} \mathrm{Tr}[\rho(\theta)\bar{K}(\theta')]\Big|_{\theta'=\theta}
\end{aligned}
\label{eq:cross_metric_definition}
\end{equation}
does not belong to the standard Fisher metric family since it couples two distinct state manifolds~\cite{PhysRevLett.72.3439,Nielsen2020}. It reduces to the Fisher metric in the CP-conserving limit, $\lim_{\phi\to 0}\mathbf{D}_{\mathrm{cross}}(\bm{\theta})
= \mathbf{D}_{\mathrm{cross}}^{\mathrm{Fisher}}[\rho](\bm{\theta})$, since $\lim_{\phi\to 0}\bar{\rho}(\bm{\theta}) = \rho(\bm{\theta})$. For finite CP-violating phase $\phi$, using $\bar{\rho}(\phi) = e^{-2i\phi\mathfrak{C}_{\mathrm{CP}}}\rho(0)
e^{2i\phi\mathfrak{C}_{\mathrm{CP}}}$ and the identity for modular operator derivatives, 
\begin{equation}
\frac{\partial^2 K}{\partial\theta^i\partial\theta^j} = -\int_0^\infty ds\,\left[\rho(\theta)^{is}\frac{\partial^2\rho(\theta)}{\partial\theta^i\partial\theta^j}\rho(\theta)^{-is} + (\text{mod. flows})\right]
\end{equation}
one obtains the CP-violating correction to the cross-metric,
\begin{equation}
\Delta\mathbf{D}_{\mathrm{CP}}(\phi) = c\phi\,\mathrm{Re}\,\mathrm{Tr}\left[\rho\int_0^\infty ds\,\rho^{is}[\mathfrak{C}_{\mathrm{CP}},\partial_i\rho]\rho^{-is}\partial_j\rho\right] + \mathcal{O}(\phi^2)
\label{eq:measure}
\end{equation}
where, $\mathfrak{C}_{\mathrm{CP}}$ is the CP generator and $\mathbf{D}_{\mathrm{cross}} = \mathbf{D}^{\mathrm{Fisher}} + \Delta\mathbf{D}_{\mathrm{CP}}(\phi)$.

\subsection{Perturbative Structure of the Symmetric Denominator}
The denominator of $\mathfrak{A}_{\mathrm{info}}$ is the symmetric relative entropy $\mathbf{D}_{\mathrm{sym}} = S(\rho\,||\,\bar{\rho}) + S(\bar{\rho}\,||\,\rho)$, which expands as, 
\begin{equation}
\mathbf{D}_{\mathrm{sym}} = \mathrm{Tr}(\rho\bar{K}) - \mathrm{Tr}(\rho K) + \mathrm{Tr}(\bar{\rho}K) - \mathrm{Tr}(\bar{\rho}\bar{K}).
\end{equation}
To evaluate this systematically, we decompose the density matrices into diagonal and off-diagonal parts, $\rho = \rho^{\mathrm{diag}} + \rho^{\mathrm{off}}$ and $\bar{\rho} = \bar{\rho}^{\mathrm{diag}} + \bar{\rho}^{\mathrm{off}}$, and apply matrix logarithm perturbation theory,
\begin{equation}
\log(\rho^{\mathrm{diag}} + \rho^{\mathrm{off}}) = \log(\rho^{\mathrm{diag}})- \sum_{n=1}^{\infty}\frac{1}{n}\left[(\rho^{\mathrm{diag}})^{-1}\rho^{\mathrm{off}}\right]^n.
\end{equation}
This expansion is controlled by the off-diagonal coherence parameter $\epsilon \equiv \|(\rho^{\mathrm{diag}})^{-1}\rho^{\mathrm{off}}\| < 1$, which quantifies the relative size of interference terms induced by tracing over unobserved degrees of freedom. Introducing the shorthand $X \equiv -(\rho^{\mathrm{diag}})^{-1}\rho^{\mathrm{off}}$ and $\bar{X} \equiv -(\bar{\rho}^{\mathrm{diag}})^{-1}\bar{\rho}^{\mathrm{off}}$,the full perturbative expansion of $\mathbf{D}_{\mathrm{sym}}$ reads in Eq.~\ref{eq:Dsym_full}. The twelve first-order terms organize into three structural types. The Type A terms (1a, 2a, 3a, 4a) involve only diagonal-diagonal logarithmic traces and cancel pairwise since $\rho^{\mathrm{diag}} = \bar{\rho}^{\mathrm{diag}}$ at the diagonal embedding. The Type B terms (1b, 2b, 3b, 4b) of the form $\mathrm{Tr}[\rho^{\mathrm{diag}}(\cdot)^{-1}(\cdot)^{\mathrm{off}}]$ vanish individually since diagonal times off-diagonal matrices have zero trace. The Type C terms (1c, 2c, 3c, 4c) of the form $\mathrm{Tr}[\rho^{\mathrm{off}}\log(\cdot)^{\mathrm{diag}}]$ also vanish for the same reason. Consequently,  leading non-vanishing contribution arises at second order $\mathbf{D}_{\mathrm{sym}} = \mathcal{O}(\epsilon^2)$ from cross-terms of the form $\mathrm{Tr}[\rho^{\mathrm{off}}\bar{X}]$, which couple off-diagonal elements of one state to those of the other and are therefore the lowest-order contributions sensitive to CP-violating phases. The multi-coherence terms at order $\mathcal{O}(\epsilon^n)$ for $n \geq 2$ describe higher-order interference effects and are systematically suppressed in the small-coherence regime. This perturbative hierarchy ensures that $\mathbf{D}_{\mathrm{sym}}$ is positive, well-controlled, and carries a clean physical interpretation. The denominator measures the total quantum-coherence-driven distinguishability between the two state manifolds, against which the antisymmetric CP-violating signal in the numerator is normalized.

\begin{equation}
\begin{aligned}
&\mathbf{D}_{\mathrm{sym}}=\\
& \underbrace{-\mathrm{Tr}[\rho^{\mathrm{diag}}\log(\bar{\rho}^{\mathrm{diag}})]}_{\text{1a}}+ \underbrace{\mathrm{Tr}[\rho^{\mathrm{diag}}(\bar{\rho}^{\mathrm{diag}})^{-1}\bar{\rho}^{\mathrm{off}}]}_{\text{1b}}
- \underbrace{\mathrm{Tr}[\rho^{\mathrm{off}}\log(\bar{\rho}^{\mathrm{diag}})]}_{\text{1c}}\\
& + \underbrace{\mathrm{Tr}[\rho^{\mathrm{diag}}\log(\rho^{\mathrm{diag}})]}_{\text{2a}}- \underbrace{\mathrm{Tr}[\rho^{\mathrm{diag}}(\rho^{\mathrm{diag}})^{-1}\rho^{\mathrm{off}}]}_{\text{2b}}
+ \underbrace{\mathrm{Tr}[\rho^{\mathrm{off}}\log(\rho^{\mathrm{diag}})]}_{\text{2c}}\\
& - \underbrace{\mathrm{Tr}[\bar{\rho}^{\mathrm{diag}}\log(\rho^{\mathrm{diag}})]
}_{\text{3a}}
+ \underbrace{\mathrm{Tr}[\bar{\rho}^{\mathrm{diag}}
(\rho^{\mathrm{diag}})^{-1}\rho^{\mathrm{off}}]}_{\text{3b}}
- \underbrace{\mathrm{Tr}[\bar{\rho}^{\mathrm{off}}\log(\rho^{\mathrm{diag}})]
}_{\text{3c}}\\
& + \underbrace{\mathrm{Tr}[\bar{\rho}^{\mathrm{diag}}
\log(\bar{\rho}^{\mathrm{diag}})]}_{\text{4a}}
- \underbrace{\mathrm{Tr}[\bar{\rho}^{\mathrm{diag}}
(\bar{\rho}^{\mathrm{diag}})^{-1}\bar{\rho}^{\mathrm{off}}]}_{\text{4b}}
+ \underbrace{\mathrm{Tr}[\bar{\rho}^{\mathrm{off}}
\log(\bar{\rho}^{\mathrm{diag}})]}_{\text{4c}}\\
& + \sum_{n=2}^{\infty}\frac{1}{n}\Bigl\{
\underbrace{\mathrm{Tr}[\rho^{\mathrm{diag}}\bar{X}^n]
- \mathrm{Tr}[\rho^{\mathrm{diag}}X^n]}_{\text{diag}\times\text{multi-coh.}}
+ \underbrace{\mathrm{Tr}[\bar{\rho}^{\mathrm{diag}}X^n]
- \mathrm{Tr}[\bar{\rho}^{\mathrm{diag}}\bar{X}^n]}_{\bar{\text{diag}}
\times\text{multi-coh.}}\\
&+ \underbrace{\mathrm{Tr}[\rho^{\mathrm{off}}\bar{X}^n]
- \mathrm{Tr}[\rho^{\mathrm{off}}X^n]}_{\text{off}\times\text{multi-coh.}}
+ \underbrace{\mathrm{Tr}[\bar{\rho}^{\mathrm{off}}X^n]
- \mathrm{Tr}[\bar{\rho}^{\mathrm{off}}\bar{X}^n]}_{\bar{\text{off}}
\times\text{multi-coh.}}\Bigr\}
\end{aligned}
\label{eq:Dsym_full}
\end{equation}

\subsection{Antisymmetric Relative Entropy as Uhlmann Geometric Phase}

CP violation requires comparing states from two distinct Hilbert spaces. The particle state $\rho(\bm{\theta})$ lives on $\mathcal{H}$ and the antiparticle state $\bar{\rho}(\bm{\theta}')$ lives on $\bar{\mathcal{H}}$, with the physical comparison performed on the product space $\mathcal{H}_{\mathrm{CP}} = \mathcal{H} \otimes\bar{\mathcal{H}}$. The corresponding state manifolds $\mathcal{M}$ and $\bar{\mathcal{M}}$each carry their own Uhlmann geometry, with the physical CP structure encoded in the diagonal embedding, 
\begin{equation}
\Delta: \mathcal{M} \to \mathcal{M} \times \bar{\mathcal{M}}, \quad
\bm{\theta} \mapsto (\bm{\theta}, \bm{\theta})
\label{eq:diagonal_embedding}
\end{equation}
where, $\rho(\bm{\theta})$ and $\bar{\rho}(\bm{\theta})$ are compared at identical parameter values. For the particle state with spectral decomposition $\rho = \sum_n p_n\ket{n}\bra{n}$, the purification $\ket{\Psi} = \sum_n \sqrt{p_n}\,\ket{n}_{\mathcal{H}} \otimes \ket{n}_{\mathrm{aux}}$ defines the Uhlmann connection 1-form~\cite{UHLMANN1986229} via parallel transport $\mathrm{\langle \Psi | \delta \Psi \rangle = 0}$, 
\begin{equation}
\Upsilon_i = \mathrm{i}\sum_{m,n}
\frac{\bra{m}\partial_i\rho\ket{n}}{p_m + p_n}\ket{m}\bra{n}
\label{eq:uhlmann_particle}
\end{equation}
with curvature 2-form $\Omega_{ij} = \partial_i\Upsilon_j - \partial_j\Upsilon_i + \mathrm{i}[\Upsilon_i,\Upsilon_j]$~\cite{UHLMANN1993253}. For the antiparticle state $\bar{\rho}(\bm{\theta}')$ with spectral decomposition $\bar{\rho} = \sum_n \bar{p}_n\ket{\bar{n}}\bra{\bar{n}}$, the analogous connection on $\bar{\mathcal{M}}$ is,
\begin{equation}
\bar{\Upsilon}_j = \mathrm{i}\sum_{m,n}
\frac{\bra{\bar{m}}\partial_j'\bar{\rho}\ket{\bar{n}}}{\bar{p}_m + \bar{p}_n}
\ket{\bar{m}}\bra{\bar{n}}
\label{eq:uhlmann_antiparticle}
\end{equation}
where, $\partial_j' = \partial/\partial\theta'^j$, with curvature $\bar{\Omega}_{jk} = \partial_j'\bar{\Upsilon}_k - \partial_k'\bar{\Upsilon}_j+ \mathrm{i}[\bar{\Upsilon}_j,\bar{\Upsilon}_k]$. The antisymmetric relative entropy, which constitutes the numerator of $\mathfrak{A}_{\mathrm{info}}$, is $\mathbf{D}_{\mathrm{asym}} =S(\rho||\bar{\rho}) - S(\bar{\rho}||\rho)$. We establish its geometric character through the following proposition.

\begin{proposition}
For finite displacement from $\bm{\theta}$ to $\bm{\theta}'$ along a path
$\mathcal{C}$ in the product manifold $\mathcal{M} \times \bar{\mathcal{M}}$:
\begin{equation}
\mathbf{D}_{\mathrm{asym}} = 2\,\mathrm{Im}
\int_{\mathcal{C}(\bm{\theta}\to\bm{\theta}')}
\mathrm{Tr}\left[\rho\,\Upsilon_i - \bar{\rho}\,\bar{\Upsilon}_i\right]d\theta^i
+ \mathcal{O}(|\bm{\theta}'-\bm{\theta}|^3)
\label{eq:numerator_geometric}
\end{equation}
\end{proposition}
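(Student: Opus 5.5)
The plan is to read $\mathbf{D}_{\mathrm{asym}}$ in the statement as the change of the antisymmetric relative entropy accumulated along $\mathcal{C}$. Concretely, set
\[
F(\bm{\theta}')=S(\rho(\bm{\theta})\,||\,\bar{\rho}(\bm{\theta}'))-S(\bar{\rho}(\bm{\theta}')\,||\,\rho(\bm{\theta})).
\]
The fundamental theorem of calculus along $\mathcal{C}$ then gives $F(\bm{\theta}')-F(\bm{\theta})=\int_{\mathcal{C}}\partial_i F\,d\theta^i$. The task reduces to identifying the integrand with $2\,\mathrm{Im}\,\mathrm{Tr}[\rho\Upsilon_i-\bar{\rho}\bar{\Upsilon}_i]$, up to terms whose integral is $\mathcal{O}(|\bm{\theta}'-\bm{\theta}|^3)$.

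\emph{Computing the derivative.} Using Eq.~(\ref{eq:relen}), the derivative is
\[
\partial_i F=\mathrm{Tr}[\rho\,\partial_i\bar{K}]-\mathrm{Tr}[\partial_i\bar{\rho}\,K]+\mathrm{Tr}[\partial_i\bar{\rho}\,\bar{K}]+\mathrm{Tr}[\bar{\rho}\,\partial_i\bar{K}].
\]
The last term equals $-\mathrm{Tr}\,\partial_i\bar{\rho}=0$ by normalization. For $\partial_i\bar{K}$ I would use the modular-derivative integral representation quoted before Eq.~(\ref{eq:measure}), in its first-order form. Equivalently, I would write it in the eigenbasis of $\bar{\rho}$ as $\bra{\bar m}\partial_i\bar{K}\ket{\bar n}=-\bra{\bar m}\partial_i\bar{\rho}\ket{\bar n}\,(\log\bar p_m-\log\bar p_n)/(\bar p_m-\bar p_n)$.

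\emph{Rewriting in Uhlmann form.} The key algebraic step compares this logarithmic-mean kernel with the arithmetic-mean kernel $1/(\bar p_m+\bar p_n)$ defining $\bar{\Upsilon}_i$ in Eq.~(\ref{eq:uhlmann_antiparticle}). I would split $\partial_i\bar{\rho}$ into its commuting (diagonal) part and its coherent (off-diagonal) part.
\begin{itemize}
\item On the diagonal both kernels reduce to $1/\bar p_n$, up to a factor $2$. The resulting pieces can then be expressed through $\mathrm{Tr}[\bar{\rho}\bar{\Upsilon}_i]$.
\item The analogous manipulation on the particle side produces $\mathrm{Tr}[\rho\Upsilon_i]$ from $\mathrm{Tr}[\partial_i\bar{\rho}\,K]$, once $\bar{\rho}$ is re-expanded about $\rho$ using the diagonal embedding Eq.~(\ref{eq:diagonal_embedding}).
\end{itemize}
Since $\Upsilon_i$ is $\mathrm{i}$ times a Hermitian operator, $\mathrm{Tr}[\rho\Upsilon_i]$ is purely imaginary, so taking $\mathrm{Im}$ isolates exactly these contributions. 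The factor $2$ in the statement comes from $p_m+p_n=2p_n$ on the diagonal.

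\emph{Remainder.} What remains is the difference between the log-mean and arithmetic-mean kernels on the off-diagonal blocks. By the same diag/off-diag bookkeeping used for Eq.~(\ref{eq:Dsym_full}), this difference is quadratic in the coherences and symmetric under $\rho\leftrightarrow\bar{\rho}$. It therefore enters $F$ like the cross-metric $\mathbf{D}_{\mathrm{cross}}$ of Eq.~(\ref{eq:cross_metric_definition}). In the antisymmetric combination the symmetric second-order (Fisher/Kubo--Mori) parts cancel, because the second-order expansion of relative entropy is symmetric. The surviving remainder is the cubic term of the expansion Eq.~(\ref{eq:cross_expansion}), which gives the $\mathcal{O}(|\bm{\theta}'-\bm{\theta}|^3)$ error.

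\emph{Main obstacle.} The hardest step is the kernel comparison on the off-diagonal blocks. The log-mean and arithmetic-mean kernels genuinely differ there, and the claimed identity needs that difference to be symmetric under $\rho\leftrightarrow\bar{\rho}$. There is also a sanity issue: $\mathrm{Tr}[\rho\Upsilon_i]=\tfrac{\mathrm{i}}{2}\mathrm{Tr}\,\partial_i\rho=0$ for normalized states. So I would first check whether the integrand must be read as the unnormalized connection, i.e.\ before imposing $\mathrm{Tr}\,\rho=1$, or as a mixed trace pairing $\rho$ with $\bar{\Upsilon}_i$. I would fix that interpretation before carrying out the cancellation argument.
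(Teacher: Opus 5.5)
Your sanity check is correct, and it undermines the statement itself. For trace-one families $\mathrm{Tr}[\rho\Upsilon_i]=\tfrac{\mathrm{i}}{2}\partial_i\mathrm{Tr}\,\rho=0$, and likewise $\mathrm{Tr}[\bar\rho\bar\Upsilon_i]=0$. In your reading the proposition therefore asserts $F(\bm{\theta}')-F(\bm{\theta})=\mathcal{O}(|\bm{\theta}'-\bm{\theta}|^3)$, i.e.\ that $\partial_iF$ and $\partial_i\partial_jF$ vanish at $\bm{\theta}'=\bm{\theta}$. Your remainder step gets this from the symmetry of the second-order expansion of relative entropy. That symmetry holds only about \emph{coincident} states, $S(\sigma\|\sigma+\delta)-S(\sigma+\delta\|\sigma)=\mathcal{O}(\delta^3)$, which is Eq.~\eqref{eq:infinitesimal_vanish}. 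At a CP-violating base point $\rho(\bm{\theta})\neq\bar\rho(\bm{\theta})$, and your own formula gives $\partial_iF|_{\bm{\theta}'=\bm{\theta}}=\mathrm{Tr}[\rho\,\partial_i\bar K]+\mathrm{Tr}[\partial_i\bar\rho\,(\log\rho-\log\bar\rho)]$. The first term is the gradient term of Eq.~\eqref{eq:cross_expansion}, and neither term vanishes in general.

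Concretely, take qubits with equal diagonals and relative coherence phase $\gamma$: $\rho=\tfrac12[\mathbbm{1}+R(\cos\alpha\,\sigma_x+\sin\alpha\,\sigma_y)]$ and $\bar\rho=\tfrac12[\mathbbm{1}+S(\cos(\alpha+\gamma)\,\sigma_x+\sin(\alpha+\gamma)\,\sigma_y)]$, and let $\bm{\theta}'$ move $S$ away from $S=R$. Then $F=0$ at $S=R$, so the ``value'' and ``change'' readings agree. Yet $\partial_SF|_{S=R}=(1-\cos\gamma)\bigl[R/(1-R^2)-\operatorname{artanh}R\bigr]>0$ for $\gamma\neq0$, $0<R<1$, while the Uhlmann integrand is identically zero. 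So the identity fails at first order at a CP-violating base point. It survives only in the CP-trivial case $\bar\rho(\bm{\theta})=\rho(\bm{\theta})$, where it reduces to Eq.~\eqref{eq:infinitesimal_vanish} and the connections play no role.

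Two further steps of your bookkeeping also fail. First, it conflates independent small parameters. ``Quadratic in the coherences'' is order $\epsilon^2$ in the parameter of Eq.~\eqref{eq:Dsym_full}, and re-expanding $\bar\rho$ about $\rho$ is an expansion in $\phi$. An integrand that is merely small in $\epsilon$ or $\phi$ still integrates to a term linear in $|\bm{\theta}'-\bm{\theta}|$. Second, the diagonal part of $\partial_i\bar K$ contributes $-\sum_n\bra{\bar n}\rho\ket{\bar n}\bra{\bar n}\partial_i\bar\rho\ket{\bar n}/\bar p_n$ to $\mathrm{Tr}[\rho\,\partial_i\bar K]$. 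That pairs $\rho$ with $\bar\Upsilon_i$, not $\bar\rho$ with $\bar\Upsilon_i$, and the $\rho\leftrightarrow\bar\rho$ symmetry you need for the off-diagonal kernel difference is asserted rather than derived.

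The paper argues differently, with no fundamental theorem of calculus and no kernel comparison. It writes $\mathbf{D}_{\mathrm{asym}}=\mathrm{Tr}[(\rho+\bar\rho)(K_{\bar\rho}-K_\rho)]$, which is correct. It then invokes $\mathrm{Tr}[\rho\log\sigma]-\mathrm{Tr}[\sigma\log\rho]=2\,\mathrm{Im}\,\mathrm{Tr}[\rho^{1/2}(\log\rho-\log\sigma)\sigma^{1/2}]$ and Eq.~\eqref{eq:connection_log_relation}, and integrates along $\mathcal{C}$. That route does not supply your missing step either:
\begin{itemize}
\item The identity fails already for commuting states. Its right side is the imaginary part of a real number, while for $\rho=\mathrm{diag}(\tfrac12,\tfrac12)$, $\sigma=\mathrm{diag}(\tfrac14,\tfrac34)$ its left side is $\tfrac12\log3-\log2\neq0$.
\item Eq.~\eqref{eq:connection_log_relation} reduces to $0=0$. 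The trace of a product of Hermitian matrices is real, and the right side vanishes by the computation you made.
\end{itemize}
Your Daleckii--Krein expression for $\partial_i\bar K$ is the right one; the paper's $-\int_0^\infty ds\,\bar\rho^{is}\partial_i\bar\rho\,\bar\rho^{-is}$ diverges on the diagonal. But none of the readings you list rescues the statement:
\begin{itemize}
\item The unnormalized connection just returns $\Delta\mathrm{Tr}\,\rho-\Delta\mathrm{Tr}\,\bar\rho$.
\item A mixed pairing of $\rho$ with $\bar\Upsilon_i$ uses the arithmetic-mean kernel $2/(\bar p_m+\bar p_n)$. This is strictly smaller than the log-mean kernel $(\log\bar p_m-\log\bar p_n)/(\bar p_m-\bar p_n)$ whenever $\bar p_m\neq\bar p_n$. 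It therefore misses a generically nonzero first-order off-diagonal piece, and it omits $\mathrm{Tr}[\partial_i\bar\rho\,(\log\rho-\log\bar\rho)]$ altogether.
\end{itemize}
What is missing is not finer bookkeeping but a corrected statement.
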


\begin{proof}
Expanding $S(\rho(\bm{\theta})||\bar{\rho}(\bm{\theta}'))$ around
$\bm{\theta}' = \bm{\theta}$ and using
$\partial_i K_{\bar{\rho}} = -\int_0^\infty ds\,\bar{\rho}^{is}
(\partial_i\bar{\rho})\bar{\rho}^{-is}$, the antisymmetric combination gives,
\begin{equation}
\begin{aligned}
\mathbf{D}_{\mathrm{asym}} &= \mathrm{Tr}[\rho(K_{\bar{\rho}} - K_\rho)]- \mathrm{Tr}[\bar{\rho}(K_\rho - K_{\bar{\rho}})]\\
&= \mathrm{Tr}[(\rho + \bar{\rho})(K_{\bar{\rho}} - K_\rho)].
\end{aligned}
\end{equation}
Applying the identity $\mathrm{Tr}[\rho\log\sigma] - \mathrm{Tr}[\sigma\log\rho] = 2\,\mathrm{Im}\,\mathrm{Tr}[\rho^{1/2}(\log\rho - \log\sigma)\sigma^{1/2}]$ and the relation between the Uhlmann 1-form and the logarithm difference, 
\begin{equation}
\mathrm{Im}\,\mathrm{Tr}\left[\rho(\partial_i\log\rho
- \partial_i\log\bar{\rho})\right]
= \mathrm{Tr}[\rho\,\Upsilon_i - \bar{\rho}\,\bar{\Upsilon}_i]
\label{eq:connection_log_relation}
\end{equation}
integrating along $\mathcal{C}$ yields Eq.~\eqref{eq:numerator_geometric}.
\end{proof}

Three structural properties follow. First, for infinitesimal displacement $d\bm{\theta} \to 0$ we can have, 
\begin{equation}
S(\rho||\rho + d\rho) - S(\rho + d\rho||\rho) = \mathcal{O}(d\theta^3)
\label{eq:infinitesimal_vanish}
\end{equation}
since $\rho^{1/2}d\rho\,\rho^{-1/2}$ is Hermitian and its trace is therefore real. The antisymmetric part thus requires finite loops and vanishes identically at infinitesimal order. Second, on the diagonal $\bm{\theta}' = \bm{\theta}$, the numerator encodes the connection difference $\Delta\Upsilon_i = \Upsilon_i - \bar{\Upsilon}_i$, which measures the eigenbasis mismatch between $\mathcal{M}$ and $\bar{\mathcal{M}}$ generated by the CP-violating phase. Third, this mismatch is non-zero iff the modular operators fail to commute,
\begin{equation}
\Delta\Upsilon_i \neq 0 \Longleftrightarrow  [K, \bar{K}] \neq 0
\label{eq:commutator_connection}
\end{equation}
since $K$ and $\bar{K}$ share an eigenbasis iff $[K,\bar{K}] = 0$, which holds iff  $\phi = 0$. The algebraic mechanism is made transparent by the modular automorphism on $\mathcal{M}_3 = \mathcal{B}(\mathbb{C}^3)$,
\begin{equation}
\sigma_s(O) = e^{isK}Oe^{-isK} = \sum_{n=0}^\infty\frac{(is)^n}{n!} \underbrace{[K,[K,\cdots[K,O]\cdots]]}_{n\,\mathrm{times}}
\label{eq:modular_flow_formal}
\end{equation}
since $[K,\rho] = 0$ and $[\bar{K},\bar{\rho}] = 0$ individually, the asymmetry arises entirely from cross-terms,
\begin{equation}
\mathbf{D}_{\mathrm{asym}} \sim \mathrm{Tr}[\rho\bar{K}] - \mathrm{Tr}[\bar{\rho}K] \neq 0 \Longleftrightarrow  [K, \bar{K}] \neq 0.
\label{eq:cross_term_asymmetry}
\end{equation}
The Frobenius norm $\|[K,\bar{K}]\|_F$ provides a basis-independent, unitarily invariant scalar quantifying this mismatch. The off-diagonal modular gaps $\Delta\kappa_{mn} = \kappa_m - \kappa_n$ govern the oscillatory phases $e^{is\Delta\kappa_{mn}}$ in the modular flow, and the non-commutativity $[K,\bar{K}] \neq 0$ encodes a Berry-like geometric curvature in the helicity state space arising from eigenbasis rotation between the two manifolds. In summary,
\begin{equation}
\mathbf{D}_{\mathrm{asym}} = S(\rho||\bar{\rho}) - S(\bar{\rho}||\rho)
= 2\,\mathrm{Im}\int_{\mathcal{C}_{\phi}}
\mathrm{Tr}[\rho\,\Upsilon_i - \bar{\rho}\,\bar{\Upsilon}_i]\,d\theta^i
\label{eq:numerator_final}
\end{equation}
with geometric origin in the Uhlmann connection mismatch $\Delta\Upsilon = \Upsilon - \bar{\Upsilon}$, algebraic origin in modular non-commutativity $[K,\bar{K}] \neq 0$, and the sharp criterion $\mathbf{D}_{\mathrm{asym}} = 0 \Leftrightarrow \phi = 0$.

\subsection{Diffeomorphism Invariance}
\begin{theorem}[Diffeomorphism invariance of information-geometric CP asymmetry]
Let $(\mathcal{M}, g)$ and $(\mathcal{M}', g')$ be globally hyperbolic spacetimes, and let $\mathfrak{d}: \mathcal{M} \to \mathcal{M}'$ be an isometric diffeomorphism with $\mathfrak{d}^*g' = g$. Let $\alpha_{\mathfrak{d}}: \mathcal{A}[\mathcal{M}]\to \mathcal{A}[\mathcal{M}']$ denote the induced algebra isomorphism in the Brunetti--Fredenhagen--Verch framework~\cite{Brunetti2003}. Then,
\begin{equation}
A_{\mathrm{info}}[\mathfrak{d}^*g'] = A_{\mathrm{info}}[g]
\end{equation}
\end{theorem}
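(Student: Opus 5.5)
The plan is to use the locally covariant (BFV) structure to transport everything through $\alpha_{\mathfrak{d}}$. Then $A_{\mathrm{info}}$, being built only from traces of functions of density operators, is unchanged under the resulting unitary conjugation.

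\emph{Step 1: states.} States on $(\mathcal{M}',g')$ are pulled back to $(\mathcal{M},g)$ by $\omega = \omega'\circ\alpha_{\mathfrak{d}}$, and likewise $\bar\omega = \bar\omega'\circ\alpha_{\mathfrak{d}}$. Since $\mathfrak{d}$ is an isometry, it preserves causal structure, so it maps the Hadamard wavefront set condition to itself. Hence the pulled-back states are again Hadamard, and the smooth part satisfies $W_{\omega}(x,y)=W_{\omega'}(\mathfrak{d}x,\mathfrak{d}y)$. The quantity $A_{\mathrm{info}}[\mathfrak{d}^*g']$ is then by definition computed from this pulled-back pair.

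\emph{Step 2: unitary implementation.} By uniqueness of the GNS triple, the isomorphism $\alpha_{\mathfrak{d}}$ together with $\omega=\omega'\circ\alpha_{\mathfrak{d}}$ yields a unitary $U_{\mathfrak{d}}:\mathcal{H}_\omega\to\mathcal{H}_{\omega'}$. It satisfies $U_{\mathfrak{d}}\pi_\omega(A)U_{\mathfrak{d}}^*=\pi_{\omega'}(\alpha_{\mathfrak{d}}A)$ and $U_{\mathfrak{d}}\ket{\Omega_\omega}=\ket{\Omega_{\omega'}}$, and the same holds for $\bar\omega$.

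\emph{Step 3: reduced states.} The observed/unobserved split must be defined covariantly: the observed subsystem is the image under $\alpha_{\mathfrak{d}}$ of the algebra of a region $\mathcal{O}\subset\mathcal{M}$, namely $\mathcal{A}(\mathfrak{d}\mathcal{O})$. The partial trace then commutes with conjugation by $U_{\mathfrak{d}}$, giving
\[
\rho' = U\rho U^*,\qquad \bar\rho' = U\bar\rho U^* ,
\]
with the same $U$ for particle and antiparticle. The key check is that $\alpha_{\mathfrak{d}}$ commutes with the CP automorphism $\Theta_{\mathrm{CP}}$. This holds because $\mathfrak{d}$ preserves orientation and time-orientation, as required of morphisms in the BFV category.

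\emph{Step 4: invariance of the functional.} Functional calculus gives $K'=UKU^*$ and $\bar K'=U\bar KU^*$. Cyclicity of the trace then gives $S(\rho'\|\bar\rho')=S(\rho\|\bar\rho)$ and $S(\bar\rho'\|\rho')=S(\bar\rho\|\rho)$, so both the numerator and the denominator $\mathbf{D}_{\mathrm{sym}}$ of Eq.~\eqref{eq:Ainfo_definition} are invariant, and hence so is $A_{\mathrm{info}}$. In the type III setting, where density matrices do not exist, the same argument runs with Araki's relative entropy. That quantity is defined through relative modular operators, which satisfy $\Delta_{\omega'\circ\alpha,\bar\omega'\circ\alpha}=U^*\Delta_{\omega',\bar\omega'}U$, so Araki's relative entropy is invariant under algebra isomorphisms. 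As a consistency check, the Uhlmann connections of Eqs.~\eqref{eq:uhlmann_particle}--\eqref{eq:uhlmann_antiparticle} transform as $\Upsilon\mapsto U\Upsilon U^*$, so the geometric form in Eq.~\eqref{eq:numerator_final} is also invariant.

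\emph{Main obstacle.} I expect Step 3 to be the hardest part: showing that coarse-graining and CP conjugation are both natural with respect to $\alpha_{\mathfrak{d}}$, so that $\rho$ and $\bar\rho$ are conjugated by one common unitary rather than by two different ones. I would first establish that $\Theta_{\mathrm{CP}}$ is a natural transformation of the locally covariant functor, i.e.\ $\alpha_{\mathfrak{d}}\circ\Theta_{\mathrm{CP}}=\Theta_{\mathrm{CP}}\circ\alpha_{\mathfrak{d}}$. I would then define the observed algebra functorially as $\mathcal{O}\mapsto\mathcal{A}(\mathcal{O})$. I would also note that the statement only covers isometric $\mathfrak{d}$; for general diffeomorphisms the pullback changes the physical state and no such claim is made.
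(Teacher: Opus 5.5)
Your proposal is correct and follows the paper's own route. BFV covariance of the state assignment gives a unitary implementer $U_{\mathfrak{d}}$ of $\alpha_{\mathfrak{d}}$ with $\rho\mapsto U_{\mathfrak{d}}\rho\, U_{\mathfrak{d}}^\dagger$ and $\bar\rho\mapsto U_{\mathfrak{d}}\bar\rho\, U_{\mathfrak{d}}^\dagger$, and unitary invariance of relative entropy (via $\log(UAU^\dagger)=U(\log A)U^\dagger$ and cyclicity of the trace) then makes $S(\rho\|\bar\rho)$ and $S(\bar\rho\|\rho)$, hence the numerator and denominator of $\mathfrak{A}_{\mathrm{info}}$, invariant.

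Your added points go beyond the paper, which simply asserts the common conjugation: the covariant choice of the observed subsystem, preservation of the Hadamard condition, and the Araki version for type III algebras. One refinement: once both states are pulled back by the same $\alpha_{\mathfrak{d}}$, a single unitary already follows by working in one reference representation in which both states are normal. CP naturality is therefore needed only to justify that the antiparticle state assignment is itself covariant, and orientation preservation of $\mathfrak{d}$ alone does not establish that naturality.
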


\begin{proof}
By the BFV locally covariant axioms, the quantum state assignment satisfies:
\begin{equation}
\omega'[\mathfrak{d}^*g'] = \omega[g] \circ \alpha_{\mathfrak{d}}^{-1}
\label{eq:bfv_covariance}
\end{equation}
This induces the density matrix transformation:
\begin{equation}
\rho[\mathfrak{d}^*g'] = U_{\mathfrak{d}}\,\rho[g]\,U_{\mathfrak{d}}^\dagger
\label{eq:density_transform}
\end{equation}
where $U_{\mathfrak{d}}$ is the unitary implementer of $\alpha_{\mathfrak{d}}$
in the GNS representation. We first establish the following lemma.

\begin{lemma}
Relative entropy is unitarily invariant. For any unitary $U$:
\begin{equation}
S(U\rho\, U^\dagger\, ||\, U\bar{\rho}\, U^\dagger) = S(\rho\,||\,\bar{\rho})
\label{eq:unitary_invariance_relent}
\end{equation}
\end{lemma}

\begin{proof}
Using $\log(UAU^\dagger) = U(\log A)U^\dagger$:
\begin{equation}
\begin{aligned}
S(U\rho\, U^\dagger\, ||\, U\bar{\rho}\, U^\dagger)
&= \mathrm{Tr}\bigl[(U\rho U^\dagger)
(\log(U\rho U^\dagger) - \log(U\bar{\rho} U^\dagger))\bigr]\\
&= \mathrm{Tr}\bigl[U\rho U^\dagger
(U\log\rho\, U^\dagger - U\log\bar{\rho}\, U^\dagger)\bigr]\\
&= \mathrm{Tr}\bigl[U\rho(\log\rho - \log\bar{\rho})U^\dagger\bigr]\\
&= \mathrm{Tr}\bigl[\rho(\log\rho - \log\bar{\rho})\bigr]
= S(\rho\,||\,\bar{\rho})
\end{aligned}
\end{equation}
where the last step uses cyclicity of the trace.
\end{proof}

Applying Eq.~\eqref{eq:density_transform} and the lemma to both
particle and antiparticle states:
\begin{equation}
\begin{aligned}
S(\rho[\mathfrak{d}^*g']\,||\,\bar{\rho}[\mathfrak{d}^*g'])
&= S(U_{\mathfrak{d}}\,\rho[g]\,U_{\mathfrak{d}}^\dagger\,||\,
U_{\mathfrak{d}}\,\bar{\rho}[g]\,U_{\mathfrak{d}}^\dagger)\\
&= S(\rho[g]\,||\,\bar{\rho}[g])
\end{aligned}
\end{equation}
The same invariance holds for the reversed term
$S(\bar{\rho}[\mathfrak{d}^*g']\,||\,\rho[\mathfrak{d}^*g'])
= S(\bar{\rho}[g]\,||\,\rho[g])$,
and therefore both the numerator $\mathbf{D}_{\mathrm{asym}}$ and the
denominator $\mathbf{D}_{\mathrm{sym}}$ are individually invariant under
$\mathfrak{d}$. Consequently,
\begin{equation}
A_{\mathrm{info}}[\mathfrak{d}^*g']
= \frac{\mathbf{D}_{\mathrm{asym}}[\mathfrak{d}^*g']}
{\mathbf{D}_{\mathrm{sym}}[\mathfrak{d}^*g']}
= \frac{\mathbf{D}_{\mathrm{asym}}[g]}{\mathbf{D}_{\mathrm{sym}}[g]}
= A_{\mathrm{info}}[g]
\end{equation}
\end{proof}

\begin{corollary}[Foliation independence]
Unlike phase-based observables $\phi_s(\Sigma_t)$ that depend on the choice of spacelike hypersurface $\Sigma_t$, the information asymmetry $A_{\mathrm{info}}$ is independent of foliation,
\begin{equation}
A_{\mathrm{info}}[\Sigma_{t_1}] = A_{\mathrm{info}}[\Sigma_{t_2}]
\end{equation}
for any two Cauchy surfaces $\Sigma_{t_1}, \Sigma_{t_2}$ of a globally hyperbolic spacetime $(\mathcal{M}, g)$.
\end{corollary}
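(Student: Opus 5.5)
The plan is to reduce foliation independence to the unitary invariance of relative entropy (the Lemma inside the proof of the diffeomorphism-invariance Theorem), using the time-slice axiom of the locally covariant framework. Fix two Cauchy surfaces $\Sigma_{t_1},\Sigma_{t_2}$ of $(\mathcal{M},g)$, and let $\mathcal{O}_1,\mathcal{O}_2$ be causally convex neighbourhoods of them. By the time-slice axiom of Brunetti--Fredenhagen--Verch, the inclusions $\mathcal{A}(\mathcal{O}_k)\hookrightarrow\mathcal{A}[\mathcal{M}]$ are isomorphisms. Composing one with the inverse of the other gives a $*$-isomorphism
\begin{equation}
\beta_{21}:\mathcal{A}(\mathcal{O}_1)\to\mathcal{A}(\mathcal{O}_2),
\end{equation}
which is the Cauchy evolution between the two slices.

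First, I will define $A_{\mathrm{info}}[\Sigma_{t_k}]$ precisely. It is built from the reduced states $\rho_k,\bar\rho_k$ obtained by restricting the global states $\omega,\bar\omega$ to $\mathcal{A}(\mathcal{O}_k)$ and tracing over the same unobserved degrees of freedom. These must be specified covariantly, as a subalgebra $\mathcal{B}_k\subset\mathcal{A}(\mathcal{O}_k)$ with $\beta_{21}(\mathcal{B}_1)=\mathcal{B}_2$.

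Second, the states are global. Hence $\omega|_{\mathcal{A}(\mathcal{O}_2)}\circ\beta_{21}=\omega|_{\mathcal{A}(\mathcal{O}_1)}$, and likewise for $\bar\omega$: this is simply the statement that $\beta_{21}$ is built from inclusions into the single algebra on which $\omega$ is defined. In the GNS representation of $\omega$, $\beta_{21}$ restricted to $\mathcal{B}_1$ is then implemented by a unitary $U_{21}$, exactly as $\alpha_{\mathfrak{d}}$ was in Eq.~\eqref{eq:density_transform}. This gives $\rho_2=U_{21}\rho_1U_{21}^\dagger$ and $\bar\rho_2=U_{21}\bar\rho_1U_{21}^\dagger$ with the \emph{same} unitary for particle and antiparticle. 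It is the same because both states are represented on the common space $\mathcal{H}_{\mathrm{CP}}$ and $\beta_{21}$ is state-independent.

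Third, I apply the Lemma to both orderings. This gives $S(\rho_2\|\bar\rho_2)=S(\rho_1\|\bar\rho_1)$ and $S(\bar\rho_2\|\rho_2)=S(\bar\rho_1\|\rho_1)$. Hence $\mathbf{D}_{\mathrm{asym}}$ and $\mathbf{D}_{\mathrm{sym}}$ are separately equal on the two slices, and so is their ratio. A remark then contrasts this with $\phi_s(\Sigma_t)$. That quantity depends on a choice of time function through the relational clock $\mathbf{T}$ of Section~\ref{sec:global}, and it does not arise as a unitarily invariant functional of the pair $(\rho,\bar\rho)$.

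The main obstacle is the second step: the existence of the implementing unitary and the finite-dimensional density-matrix description. For Type III local algebras, $\rho_k$ do not exist as trace-class operators. The honest route is to replace the von Neumann form by Araki's relative entropy $S(\omega\|\bar\omega)$ on the von Neumann algebra. This is invariant under any $*$-isomorphism, by monotonicity applied to $\beta_{21}$ and to $\beta_{21}^{-1}$, so no unitary is needed. In the finite-rank flavour subbundle setting of the paper, the Araki relative entropy reduces to Eq.~\eqref{eq:relen}. I would therefore first try to argue via isomorphism-invariance of Araki's entropy, and then specialise to the matrix case.
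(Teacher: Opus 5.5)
Your argument is correct, but it takes a different route from the paper.

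\textbf{The paper's route.} The paper reduces the corollary to the diffeomorphism-invariance Theorem. It asserts that any two Cauchy surfaces are related by a diffeomorphism $\mathfrak{d}:\mathcal{M}\to\mathcal{M}$ with $\mathfrak{d}^*g=g$, and that $\alpha_{\mathfrak{d}}$ is unitarily implemented on the GNS space. Hence $\rho[\Sigma_{t_2}]=U_{\mathfrak{d}}\rho[\Sigma_{t_1}]U_{\mathfrak{d}}^\dagger$, likewise for $\bar\rho$, and the Lemma finishes.

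\textbf{Your route.} You replace that isometry by the Cauchy evolution $\beta_{21}$ furnished by the time-slice axiom. You then conclude either with a finite-dimensional implementing unitary or with the $*$-isomorphism invariance of Araki's relative entropy.

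\textbf{Comparison.} The paper's route buys economy, since foliation independence becomes a literal special case of the Theorem. But its premise is rarely satisfied:
\begin{itemize}
\item An isometry carrying $\Sigma_{t_1}$ onto $\Sigma_{t_2}$ cannot exist when the two surfaces have different induced geometry. This already happens in Minkowski space for a hyperplane and a bumped spacelike Cauchy surface.
\item A generic globally hyperbolic spacetime has no isometries at all.
\item A non-isometric diffeomorphism yields no automorphism of $\mathcal{A}[\mathcal{M}]$ in the BFV framework.
\item Unitary implementability of $\alpha_{\mathfrak{d}}$ in $\pi_\omega$ additionally requires $\pi_\omega\circ\alpha_{\mathfrak{d}}\cong\pi_\omega$, for instance $\alpha_{\mathfrak{d}}$-invariance of $\omega$.
\end{itemize}
Your route needs none of this and works for arbitrary globally hyperbolic $(\mathcal{M},g)$. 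Via Araki it also covers Type III local algebras, where trace-class density matrices do not exist.

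\textbf{Where the content sits.} Your route makes clear that the content lies in your hypothesis $\beta_{21}(\mathcal{B}_1)=\mathcal{B}_2$. Under it, the images of $\mathcal{B}_1$ and $\mathcal{B}_2$ in $\mathcal{A}[\mathcal{M}]$ under the time-slice inclusions coincide. Both slices therefore see the same pair of functionals $\omega,\bar\omega$ on one subalgebra, so $U_{21}$ may be taken to be the identity in the global GNS representation, and the equality is essentially tautological. If the observed subsystem were instead localized on each slice separately, the reduced states would be related by a non-unitary channel and the equality need not hold. Since you state the hypothesis explicitly this is not a gap, but it is the assumption you should foreground.
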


\begin{proof}
In a globally hyperbolic spacetime $(\mathcal{M}, g)$, any two Cauchy surfaces $\Sigma_{t_1}$ and $\Sigma_{t_2}$ are related by a diffeomorphism $\mathfrak{d}: \mathcal{M} \to \mathcal{M}$ that is an element of the spacetime diffeomorphism group and satisfies $\mathfrak{d}^*g = g$. By the BFV local covariance axiom, the algebra isomorphism $\alpha_{\mathfrak{d}}: \mathcal{A}[\mathcal{M}] \to \mathcal{A}[\mathcal{M}]$ acts unitarily on the GNS Hilbert space, and the density matrices restricted to $\Sigma_{t_1}$ and $\Sigma_{t_2}$ are therefore related by,
\begin{equation}
\rho[\Sigma_{t_2}] = U_{\mathfrak{d}}\,\rho[\Sigma_{t_1}]\,U_{\mathfrak{d}}^\dagger,
\qquad
\bar{\rho}[\Sigma_{t_2}] = U_{\mathfrak{d}}\,\bar{\rho}[\Sigma_{t_1}]\,U_{\mathfrak{d}}^\dagger
\end{equation}
By Lemma~\ref{eq:unitary_invariance_relent}, unitary invariance of relative entropy then gives,
\begin{equation}
S(\rho[\Sigma_{t_2}]\,||\,\bar{\rho}[\Sigma_{t_2}])= S(\rho[\Sigma_{t_1}]\,||\,\bar{\rho}[\Sigma_{t_1}])
\end{equation}
and identically for the reversed term, so both $\mathbf{D}_{\mathrm{asym}}$ and $\mathbf{D}_{\mathrm{sym}}$ are foliation-independent, and the result $A_{\mathrm{info}}[\Sigma_{t_1}] = A_{\mathrm{info}}[\Sigma_{t_2}]$ follows immediately from the diffeomorphism invariance theorem.
\end{proof}

\subsection{CP Conservation and Modular Thermal Equilibrium}

The diffeomorphism invariance of $\mathfrak{A}_{\mathrm{info}}$ established above admits a deeper thermodynamic interpretation through the modular structure of the state manifolds~\cite{PhysRevLett.117.041601,RevModPhys.74.197}. For a general quantum state $\rho$ with modular Hamiltonian $K = -\log\rho$, the effective modular temperature is defined as, 
\begin{equation}
T_{\mathrm{mod}}[\rho] = \left(\frac{\partial S[\rho]}{\partial\langle H\rangle}
\right)^{-1}
\label{eq:modular_temp_general}
\end{equation}
which, for a state diagonal in its eigenbasis $\rho \approx \sum_n p_n\ket{n}\bra{n}$, reduces to the eigenstate-resolved expression $T_{\mathrm{mod}}^{(n)} = (k_B\kappa_n)^{-1}$ with $\kappa_n = -\log p_n$. The CP transformation relates the antiparticle state to the particle state via, $\bar{\rho}(\phi) = e^{-i\phi\mathfrak{C}}\rho(0)e^{i\phi\mathfrak{C}}$, where, $\mathfrak{C}$  is acting on off-diagonal phases, so that the off-diagonal elements acquire the phase shifts
$\bar{\rho}_{nm} = \rho_{nm}\,e^{-i\phi(\eta_n - \eta_m)}$ with $\eta_n$ the CP parities. Under time evolution parametrized by proper time $\tau$, the modular temperature evolves as,
\begin{equation}
\frac{dT_{\mathrm{mod}}^{(n)}}{d\tau} =-\frac{(T_{\mathrm{mod}}^{(n)})^2}{k_B}\frac{d\kappa_n}{d\tau}= \frac{(T_{\mathrm{mod}}^{(n)})^2}{k_B p_n}\frac{dp_n}{d\tau}
\label{eq:temp_evolution}
\end{equation}
The eigenvalue evolution receives contributions from both the decay width and the off-diagonal coherences,
\begin{equation}
\frac{dp_n}{d\tau} = \Gamma_n p_n + \sum_{m \neq n}\mathrm{Re}\bigl[(\rho^{\mathrm{off}})_{nm}(\partial_\tau\rho^{\mathrm{off}})_{mn}\bigr]
\label{eq:eigenvalue_evolution}
\end{equation}
The central result connecting modular thermodynamics to CP violation is the following.

\begin{theorem}[Modular Thermal Equilibrium]
CP conservation $(\phi = 0)$ is equivalent to modular thermal equilibrium between particle and antiparticle sectors.
\end{theorem}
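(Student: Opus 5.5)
The plan is to make ``modular thermal equilibrium'' precise before proving anything. I will take it to mean that the two sectors have identical modular Hamiltonians, $K=\bar K$. Equivalently, the eigenstate-resolved modular temperatures agree, $T_{\mathrm{mod}}^{(n)}=\bar T_{\mathrm{mod}}^{(n)}$ for all $n$, on a common eigenbasis, and this persists under the proper-time evolution of Eq.~\eqref{eq:temp_evolution}. Because $T_{\mathrm{mod}}^{(n)}=(k_B\kappa_n)^{-1}$ with $\kappa_n=-\log p_n$, agreement of all temperatures on a shared eigenbasis is exactly $\mathrm{Spec}(K)=\mathrm{Spec}(\bar K)$ together with $[K,\bar K]=0$. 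This is the same as $K=\bar K$, and hence $\rho=\bar\rho$. The theorem therefore reduces to the statement that $\phi=0$ if and only if $\bar\rho=\rho$, read through the modular data.

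\emph{Forward direction.} Assume $\phi=0$. The relation $\bar\rho(\phi)=e^{-i\phi\mathfrak{C}}\rho(0)e^{i\phi\mathfrak{C}}$ gives $\bar\rho=\rho$ immediately, so $\bar K=K$ and every $\bar\kappa_n=\kappa_n$. For the dynamical part, I will use Eq.~\eqref{eq:eigenvalue_evolution}. At $\phi=0$ the widths coincide, $\Gamma_n=\bar\Gamma_n$, by CPT together with the absence of CP violation in mixing. The off-diagonal coherences also coincide, because $\bar\rho_{nm}=\rho_{nm}e^{-i\phi(\eta_n-\eta_m)}$ reduces to $\rho_{nm}$. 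Hence $dp_n/d\tau=d\bar p_n/d\tau$, and by Eq.~\eqref{eq:temp_evolution} the temperatures stay equal along the flow. In information terms this also gives $\mathbf{D}_{\mathrm{sym}}=0$.

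\emph{Converse direction.} Assume modular equilibrium in the sense above. I will argue by contraposition: if $\phi\neq0$, then by Eq.~\eqref{eq:commutator_connection} we have $[K,\bar K]\neq0$, because the phases $e^{-i\phi(\eta_n-\eta_m)}$ rotate the eigenbasis. Non-commuting operators cannot be simultaneously diagonal, so no common eigenbasis exists on which the $T_{\mathrm{mod}}^{(n)}$ could be compared and found equal. This contradicts equilibrium. The case where only spectra are compared is also covered: unitary conjugation preserves the spectrum, so spectral equality holds automatically, and equilibrium must therefore be formulated with a shared basis. I will also invoke the sharp criterion $\mathbf{D}_{\mathrm{asym}}=0\Leftrightarrow\phi=0$ from Eq.~\eqref{eq:numerator_final} as a consistency check.

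\emph{Main obstacle.} The hard step is the converse when $\phi\neq0$ but the coherences happen to be invisible, namely $\rho_{nm}=0$ whenever $\eta_n\neq\eta_m$. In that situation $\bar\rho=\rho$ even though $\phi\neq0$. My first attempt would be to restrict the theorem to states with nonvanishing CP-odd coherences, the generic case assumed in the perturbative expansion of Eq.~\eqref{eq:Dsym_full}, and then show that under this genericity the map $\phi\mapsto\bar\rho(\phi)$ is injective near $\phi=0$. For injectivity I would use $\partial_\phi\bar\rho|_{0}=-i[\mathfrak{C},\rho]\neq0$. One further caveat: because $\phi$ is a phase, periodicity means $\phi\in 2\pi\mathbb{Z}/\gcd(\eta_n-\eta_m)$ also gives $\bar\rho=\rho$. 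I would identify these values with $\phi=0$ as physically equivalent phases.
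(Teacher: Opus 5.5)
\paragraph{A different route.} The paper uses a \emph{dynamical} notion of equilibrium: equal rates $dT^{(n)}_{\mathrm{mod}}/d\tau$ in the two sectors, packaged as $\mathbf{D}_{\mathrm{therm}}=0$. It obtains the equivalence from the asserted difference \eqref{eq:cp_difference}, $dp_n^{(\bar\rho)}/d\tau-dp_n^{(\rho)}/d\tau\propto\sin(\phi\,\Delta\eta_{nm})$, which gives $\mathbf{D}_{\mathrm{therm}}\propto\sin\phi$ times a coherence integral. You instead make equilibrium static and basis-sensitive, $K=\bar K$, so the theorem becomes $\phi=0\Leftrightarrow\bar\rho=\rho$. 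Your version is rigorous and states its hypotheses honestly: nonvanishing CP-odd coherences, and the identification of $\phi$ modulo $2\pi/g$ with $g=\gcd(\eta_n-\eta_m)$. The paper covers these only through ``coherences non-vanishing'' and $\sin\phi$. The price is that you say nothing about temperature rates, and the content reduces essentially to the definition of $\bar\rho(\phi)$.

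Your remark that unitary conjugation preserves the spectrum is in fact the obstruction to the paper's route. Suppose $\bar\rho=e^{-i\phi\mathfrak{C}}\rho\,e^{i\phi\mathfrak{C}}$ holds along the flow. Then $\bar p_n(\tau)=p_n(\tau)$, so the modular temperatures and their rates coincide for every $\phi$. Moreover, the coherence terms of \eqref{eq:eigenvalue_evolution} are phase-invariant, $(\bar\rho^{\mathrm{off}})_{nm}(\partial_\tau\bar\rho^{\mathrm{off}})_{mn}=\rho_{nm}\partial_\tau\rho_{mn}$, so no $\sin(\phi\Delta\eta_{nm})$ arises from them. The paper's ``conversely'' step, $\mathbf{D}_{\mathrm{therm}}>0\Rightarrow\phi\neq0$, is also only the contrapositive of its forward step. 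A basis-sensitive criterion like yours is therefore what makes a true equivalence possible at all.

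\paragraph{A step that fails.} Your converse passes through $\phi\neq0\Rightarrow[K,\bar K]\neq0$, citing \eqref{eq:commutator_connection}. The paper asserts this without proof, and it is false in general. Take $\rho=\left(\begin{smallmatrix}a&c\\ c^{*}&a\end{smallmatrix}\right)$ with $0<|c|<a=\tfrac12$, parities $\eta=\pm1$, and $\phi=\pi/2$, which is not in your identified set $\pi\mathbb{Z}$. Then $\bar\rho=\left(\begin{smallmatrix}a&-c\\ -c^{*}&a\end{smallmatrix}\right)$ commutes with $\rho$ and has the same spectrum, yet $\bar\rho\neq\rho$: the two eigenvalues are swapped on the common eigenbasis. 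The same example refutes your claim that $\mathrm{Spec}(K)=\mathrm{Spec}(\bar K)$ together with $[K,\bar K]=0$ amounts to $K=\bar K$. What you need is eigenvalue agreement eigenvector by eigenvector.

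The repair is shorter than either the commutator route or the local-injectivity argument. Under your definition you only need $\bar\rho\neq\rho$. The difference $\bar\rho_{nm}-\rho_{nm}=\rho_{nm}\bigl(e^{-i\phi\Delta\eta_{nm}}-1\bigr)$ is nonzero for some pair exactly when some coherence $\rho_{nm}\neq0$ has $\phi\,\Delta\eta_{nm}\notin2\pi\mathbb{Z}$. This holds globally in $\phi$, not only near $\phi=0$.

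Finally, drop the consistency check via $\mathbf{D}_{\mathrm{asym}}=0\Leftrightarrow\phi=0$. For full-rank $2\times2$ density matrices with equal spectra, $S(\rho\,||\,\bar\rho)=S(\bar\rho\,||\,\rho)$ identically, so in the two-state case $\mathbf{D}_{\mathrm{asym}}$ vanishes for every $\phi$.
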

\begin{proof}
For the CP-conjugate pair, the off-diagonal contributions to the eigenvalue evolution differ by, 
\begin{equation}
\frac{dp_n^{(\bar{\rho})}}{d\tau} - \frac{dp_n^{(\rho)}}{d\tau}\propto \sin(\phi\,\Delta\eta_{nm})
\label{eq:cp_difference}
\end{equation}
where $\Delta\eta_{nm} = \eta_n - \eta_m$. Define the thermodynamic distance between CP-conjugate modular flows as,
\begin{equation}
\mathbf{D}_{\mathrm{therm}} = \int_0^\infty d\tau\,w(\tau)\sum_n \left|\frac{dT_{\mathrm{mod}}^{(n)}[\rho]}{d\tau}- \frac{dT_{\mathrm{mod}}^{(n)}[\bar{\rho}]}{d\tau}\right|
\label{eq:therm_distance}
\end{equation}
where $w(\tau)$ is a positive weight function. Substituting Eq.~\eqref{eq:cp_difference} into Eq.~\eqref{eq:temp_evolution} and using Eq.~\eqref{eq:therm_distance}, 
\begin{equation}
\mathbf{D}_{\mathrm{therm}} \propto \sin(\phi)\int_0^\infty d\tau\,w(\tau)\sum_{n \neq m}\bigl|(\rho^{\mathrm{off}})_{nm}(\partial_\tau\rho^{\mathrm{off}})_{mn}\bigr|
\label{eq:therm_distance_explicit}
\end{equation}
At $\phi = 0$ the right-hand side vanishes identically, so $\mathbf{D}_{\mathrm{therm}} = 0$, which implies,
\begin{equation}
\frac{dT_{\mathrm{mod}}^{(n)}[\rho]}{d\tau}\bigg|_{\phi=0}= \frac{dT_{\mathrm{mod}}^{(n)}[\bar{\rho}]}{d\tau}\bigg|_{\phi=0}
\end{equation}
Conversely, if $\mathbf{D}_{\mathrm{therm}} > 0$ and the off-diagonal coherences are non-vanishing, then Eq.~\eqref{eq:therm_distance_explicit} requires $\sin(\phi) \neq 0$, hence $\phi \neq 0$.
\end{proof}
The theorem establishes the sharp equivalence:
\begin{equation}
\phi = 0 \Longleftrightarrow\frac{dT_{\mathrm{mod}}[\rho]}{d\tau} = \frac{dT_{\mathrm{mod}}[\bar{\rho}]}{d\tau}
\label{eq:thermal_equilibrium_criterion}
\end{equation}
This is not merely an analogy, the modular temperature is the precise information-theoretic object whose gradient encodes the departure from CP symmetry, and $\mathbf{D}_{\mathrm{therm}} \propto \sin(\phi_{\mathrm{CP}})$ provides a measure of CP violation through modular dynamics alone. The connection to the diffeomorphism invariance established in the previous section is immediate, since $T_{\mathrm{mod}}[\rho]$ is constructed entirely from the
modular operator $K = -\log\rho$, and since both $\mathbf{D}_{\mathrm{asym}}$ and $\mathbf{D}_{\mathrm{sym}}$ are invariant under unitary conjugation of the density matrices, the thermodynamic distance $\mathbf{D}_{\mathrm{therm}}$
inherits the same diffeomorphism invariance as $\mathfrak{A}_{\mathrm{info}}$. CP violation is therefore manifested as a modular thermal disequilibrium that is intrinsic to the quantum state and independent of any choice of spacetime
foliation, coordinate system, or reference frame.

\subsection{Summary}

The construction developed above constitutes a apparent reformulation of CP violation in the language of quantum information geometry, free from the presuppositions of local spacetime structure that render the standard phase-based formulation ill-defined in curved or dynamical backgrounds. The central object $\mathfrak{A}_{\mathrm{info}}[g]$ is a genuine scalar on the space of quantum states, diffeomorphism invariant by the BFV local covariance axioms and the unitary invariance of relative entropy, foliation-independent by the equivalence of any two Cauchy surfaces under internal spacetime diffeomorphisms, and independent of any choice of time coordinate, reference frame, or Fock space representation. This invariance stands in sharp contrast to the standard weak phase, whose operational definition requires stable asymptotic hadronic states, coherent oscillations over macroscopic proper times, and an unambiguous foliation of spacetime, all of which fail in generic curved backgrounds or at GUT-scale energies where no approximately localized hadronic subalgebra exists within the local QFT operator algebra. The algebraic signature of CP violation is the modular commutator $[K, \bar{K}] \neq 0$. The operators $K$ and $\bar{K}$ share an eigenbasis if and only if they commute, which occurs if and only if $\phi= 0$, so the Frobenius norm $\|[K,\bar{K}]\|_F$ provides a basis-independent, unitarily invariant scalar measuring the degree of violation. The antisymmetric relative entropy $\mathbf{D}_{\mathrm{asym}} \neq 0$ if and only if $[K,\bar{K}] \neq 0$, establishing a direct operational bridge between modular non-commutativity and geometric phase. Through the modular temperature $T_{\mathrm{mod}}[\rho] = (\partial S[\rho]/\partial\langle H\rangle)^{-1}$, CP conservation is equivalent to modular thermal equilibrium $dT_{\mathrm{mod}}[\rho]/d\tau = dT_{\mathrm{mod}}[\bar{\rho}]/d\tau$, with the thermodynamic distance $\mathbf{D}_{\mathrm{therm}} \propto\sin(\phi)$ vanishing precisely at the CP-conserving point. This thermodynamic structure leads to a covariant reformulation of Sakharov's conditions in which each of the three standard requirements, formulated in Minkowski spacetime with structures absent in curved backgrounds, admits a precise modular translation, 
\begin{itemize}
\item Baryon number violation corresponds to quantum information flow between subsystems, captured by the non-unitarity of the effective quantum channel $\mathrm{Tr}(\mathcal{E}^\dagger\mathcal{E}) < 1$.
\item C and CP violation correspond to modular flow asymmetry $K \neq \bar{K}$, with the degree of violation measured by the invariant $\|K - \bar{K}\| > 0$, or equivalently by $\mathfrak{A}_{\mathrm{info}}[g] \neq 0$.
\item Departure from thermal equilibrium corresponds to a non-zero modulartemperature gradient $\nabla_\mu T_{\mathrm{mod}} \neq 0$, which in generic curved spacetimes measures the failure of the modular parameter $s \in \mathbb{R}$
to coincide with any geometric time, in the sense of the Connes--Rovelli thermal time hypothesis.
\end{itemize}
Preliminary evidence for a two-component system, with additional supporting derivations in the acknowledgment-linked document, suggests the viability of this approach, though rigorous phenomenological studies are required.

\section{Acknowledgment}
The article is dense in its own right, the readers are encouraged to consult a supplementary note \href{https://www.dropbox.com/scl/fi/p982w8i688w5djrd5744j/Supplementary_DIFCPV.pdf?rlkey=7z0lz14iqpuqp0jg3td6kgjio&st=geq4netg&dl=0}{\color{blue}LINKED} here, a feed back is also expected. The author expresses sincere \& thankful  gratitude to \textbf{Professor Artur Kalinowski} for providing computing resources and extra time outside experimental responsibilities.

\bibliographystyle{elsarticle-num} 
\bibliography{reference}

\end{document}